\documentclass[11pt,a4paper,reqno]{amsart}

\usepackage{geometry}

\setcounter{secnumdepth}{5}
\usepackage[latin1]{inputenc} \usepackage[T1]{fontenc} \usepackage{lmodern}

\usepackage[english]{babel}

\usepackage{amsmath,amssymb,amsfonts}
\usepackage[all]{xy}

\usepackage[algoruled,linesnumbered]{algorithm2e}

\usepackage[colorlinks=true,linkcolor=blue,citecolor=blue,bookmarks=false,pdftex]{hyperref}
\hypersetup{
  pdfauthor   = {C. Dunand and R. Lercier},
  pdftitle    = {Normal Elliptic Bases and Torus-Based Cryptography},
  pdfsubject  = {},
  pdfkeywords = {}
}

\newcommand{\Trc}{\mathop{\rm Tr}}
\newcommand{\gbpgcd}{\mathop{\rm gcd}}
\newcommand{\Reslt}{\mathop{\rm Res}}

\newcommand{\mF}{\mathbb{F}}

\newcommand{\mQ}{\mathbb{Q}}

\newcommand{\mZ}{\mathbb{Z}}

\newcommand{\boM}{\mathcal{M}}

\newcommand{\FF}[1]{\mF_{#1}}
\newcommand{\FFx}[1]{\FF{#1}^{\times}}

\newcommand{\TT}[1]{T_{#1}(\FF{q})}
\newcommand{\TSet}{\mathcal T}
\newcommand{\PP}{\mathcal P}

\newcommand\agot{{\mathfrak a}}
\newcommand\bgot{{\mathfrak b}}

\newcommand{\embed}{\hookrightarrow}


\newtheorem{theorem}{Theorem}
\newtheorem{lemma}{Lemma}

\numberwithin{equation}{section}

\begin{document}

\title{Normal Elliptic Bases and Torus-Based Cryptography}

\author{Cl\'ement Dunand}%
\address{Institut de recherche math\'ematique de Rennes, Universit\'e de
  Rennes 1, Campus de Beaulieu, F-35042 Rennes Cedex, France.}%
\email{clement.dunand@univ-rennes1.fr}

\author{Reynald Lercier}%
\address{ DGA/C\'ELAR, La Roche Marguerite, F-35174 Bruz Cedex, France.  }%
\address{ Institut de recherche math\'ematique de Rennes, Universit\'e de
  Rennes 1, Campus de Beaulieu, F-35042 Rennes Cedex, France.  }%
\email{reynald.lercier@m4x.org}

\date{\today}

\begin{abstract}
  We consider representations of algebraic tori $T_n(\FF{q})$ over finite
  fields. We make use of normal elliptic bases to show that, for infinitely
  many squarefree integers $n$ and infinitely many values of $q$, we can
  encode $m$ torus elements, to a small fixed overhead and to $m$
  $\varphi(n)$-tuples of $\FF{q}$ elements, in quasi-linear time in $\log q$.

  This improves upon previously known algorithms, which all have a
  quasi-quadratic complexity.  As a result, the cost of the encoding phase is
  now negligible in Diffie-Hellman cryptographic schemes.
\end{abstract}

\maketitle


\section{Introduction}
\label{sec:introduction}

Multiplicative groups defined by finite fields $\FFx{q^n}$ are of first
importance in numerous applications, especially in discrete-log based public
key cryptography. In this field, Diffie and Hellman's seminal
paper~\cite{DiHe76} opened the way to their use in numerous cryptographic
standards in the eighties. It turns out that elliptic curves are often
prefered today, since there exist subexponential algorithms to solve the
discrete logarithm problem in finite fields~\cite{Schirokauer93}. But
$\FFx{q^n}$-subgroups of order $\Phi_n(q)$, where $\Phi_n$ denotes the $n$-th
cyclotomic polynomial (the minimal polynomial over $\mQ$ of
$e^{\frac{2i\pi}{n}}$), has reattracted attention since the publication of
Lenstra and Verheul's \textsc{xtr} scheme in 2000~\cite{LeVe00}.

Lenstra and Verheul noticed that in the very particular case $n=6$, working in
the $\FFx{q^6}$-subgroup of order $\Phi_6(q) = q^2-q+1$ can be done with a
$\FFx{q^2}$ arithmetic, whereas the best way to break the system remains to
solve discrete logarithms problems in $\FFx{q^6}$.  Certainly, this yields
reasonably competitive implementations. But the most surprising is that
\textsc{xtr} subgroups are, up to symmetry, generated by the relative trace
$\Trc_{\,\FF{q^6}/\FF{q^2}}$. As a consequence, we can encode them with only
two elements of $\FF{q}$, with time complexity equal to $\log^{1+o(1)} q$
elementary operations.  \medskip

In this paper, we exhibit for $n > 6$, $n$ fixed, encodings that can be
computed very efficiently, that is with $\log^{1+o(1)} q$ bit operations
too. To this purpose, we start from the interpretation of
\textsc{xtr}-subgroups as algebraic tori, due to Rubin and
Silverberg~\cite{RuSi03}, and the explicit encoding proposed by van Dijk and
Woodruff~\cite{DiWo04}.

Algebraic tori over $\FF{q}$ are algebraic groups defined over $\FF{q}$ that are
isomorphic to some $(G_m)^d$ over $\overline{\mF}_q$, where $G_m$ denotes the
multiplicative group and $d$ is the dimension of the torus. Algebraic tori
involved here are
\begin{equation}
  \label{eq:1}
  \TT{n} \cong \left\{ x \in \FFx{q^n}: N_{\FF{q^n}/F}(x) = 1
    \text{ whenever } \FF{q} \subset F \subsetneq \FF{q^n}, F\text{ a field}  \right\}\,.
\end{equation}
These are algebraic varieties of dimension $d=\varphi(n)$, where $\varphi$ is
the Euler-totient function. It turns out that in terms of group, $\TT{n}$
is a subgroup of order $\Phi_n(q)$, that is
\begin{math}
  \TT{n} \cong \{ x \in \FFx{q^n}: x^{\Phi_n(q)} = 1 \}\,.
\end{math}
An efficient rational parameterization of these tori with $\varphi(n)$-tuples
instead of $n$-tuples would thus allow the same security as in $\FFx{q^n}$,
but a reduced communication cost.  Even though practical constructions exist
for particular values of $n$ (for instance, 2, 3 or 6 with
\textsc{luc}~\cite{SmLe93}, \textsc{xtr}\cite{LeVe00} or
\textsc{ceilidh}\cite{RuSi03}), the rationality or stable rationality of such
structures for every $n$ has been a concern for several years
now~\cite{Voskresinskii91}.

A nice workaround proposed by van Dijk and Woodruff~\cite{DiWo04} consists in
adding to the torus $ \TT{n}$ some well chosen finite fields and mapping the
whole set into another product of finite fields,
\begin{equation}
  \label{eq:2}
  \theta: \TT{n} \times
  \prod_{\substack{d\,|\,n\\ \mu(n/d)=-1}} \FFx{q^d} \to
  \prod_{\substack{d\,|\,n\\ \mu(n/d)=+1}} \FFx{q^d}\,,
\end{equation}
where $\mu$ is the Moebius function. This bijection enables to compactly
represent $m$ elements of $\TT{n}$ with roughly $m\varphi(n)$ elements in
$\FF{q}$ for large enough $m$.  For well chosen $q$ and $n$, mainly $n$ a
product of distinct primes and $q$ of maximal order modulo these primes,
evaluating $\theta$ requires at least $n^{3+o(1)}\log^{2+o(1)} q$
elementary operations.\medskip

In the present work, we observe that the heaviest part of the complexity comes
from exponentiations in $\FF{q^n}$ to powers with sparse decomposition in
basis $q$ and we succeed in speeding up the algorithm with the help of a new
representation of field extensions. Couveignes and Lercier recently
constructed a new family of normal bases, called normal elliptic
bases~\cite{CoLe09}. They allow to perform low cost arithmetic in $\FF{q^n}$
and in the context of tori this yields encodings with a $\log q$ smaller
computational cost.  In order to reach this complexity, we need inputs $q$ and
$n$ such that $\Phi_e(q)$ and $\Phi_f(q)$ are relatively prime for any
distinct divisors $e$ and $f$ of $n$. This is not a big restriction in
applications, since there are infinitely many $n$ and $q$ such that this
condition holds.

It is worth to notice that the encoding cost becomes negligible in regard of
the major cost in many Diffie-Hellman cryptosystems, $n^{2+o(1)}\log^{2+o(1)}
q$ bit operations, due to exponentiations in $\FF{q^n}$. This is particularly
interesting since in cryptographic applications $q$ tends to be a large number
and $n$ rather small.

We may also remark that these ideas can be easily adapted to the improved
variant of $\theta$ introduced by Dijk \textit{et al.} in
2005~\cite{DiGrPaRuSiStWo05}.  They substitute tori of small dimensions for
the finite fields $\FF{q^d}$ in Eq.~\eqref{eq:2}, but all the calculations
still take place in $\FF{q^n}$ and can be sped up thanks to normal elliptic
bases.\bigskip

\noindent \textbf{Outline.}  In Section~\ref{sec:algebraic-tori}, we present
some background materials about algebraic tori encodings.
Section~\ref{sec:ellipt-peri-algebr} outlines some nice cyclotomic properties
of these algorithms and shows how the use of a normal elliptic basis can yield
a $\log q$ speedup. Section~\ref{sec:crypt-appl} discusses some of the
cryptographic applications of these mappings.

\section{Explicit Algebraic Tori Encodings}
\label{sec:algebraic-tori}

Van Dijk and Woodruff first proposed an algorithmic way to encode efficiently
a torus $\TT{n}$, modulo some small constraints on $q$ and $n$~\cite{DiWo04}.

\subsection{Principles}
\label{sec:principles}

We start from the embedding $\TT{n} \embed \FFx{q}$ and we complete both sides with
the missing parts in order to create a bijection.\medskip

From $q^n-1 =\prod_{d\,|\,n} \Phi_d(q)$, we have $\FFx{q} \simeq
\prod_{d\,|\,n} \TT{d}$. Van Dijk and Woodruff first add the product
$\prod_{d\,|\,n, d \neq n} \TT{d}$ to the left hand side of the
embedding. Then, they identify factors of the form $\FFx{q^d}$ with $d\,|\,n$
in this expression. At this step, we may have to add some newer tori, of
smaller dimension. As a result, this will modify the right hand side too. But
again, we identify there factors of the form $\FFx{q^d}$. After enough such
iterations, this yields a bijection $\theta$ (\textit{cf.}  Eq.~\eqref{eq:2}).

The domain of this bijection is much larger than $\TT{n}$, but in the case
where we have $m$ elements of $\TT{n}$ to encode, we can nevertheless recover
a quasi optimal encoding rate. We refer to Section~\ref{sec:key-agreement} for
details.  \bigskip

\noindent \textbf{Example.}
Let us see how it works for $n=15$. We have 
\begin{displaymath}
  \TT{1} \times \TT{3} \times \TT{5} \times \TT{{15}} \simeq \FFx{q^{15}}\,.
\end{displaymath}
So, $(\TT{1} \times \TT{3})\times (\TT{1} \times \TT{5}) \times \TT{15}
\simeq \FFx{q^{15}} \times \TT{1}$, hence the bijection 
\begin{displaymath}
  \FFx{q^3} \times \FFx{q^5} \times \TT{{15}} \xrightarrow{\sim}  \FFx{q^{15}} \times
  \FFx{q}\,,
\end{displaymath}
since
\begin{displaymath}
  \TT{1}\simeq \FFx{q},\ \TT{3}\times \TT{1} \simeq \FFx{q^3}\ \text{ and }\ 
  \TT{5}\times \TT{1} \simeq \FFx{q^5}.  
\end{displaymath}
Let us remark that there is no guarantee that the $\Phi_d(q)$'s are coprime,
and thus this bijection may not be a group isomorphism.

\subsection{Explicit Encodings}
\label{sec:an-explicit-version}

We now show how we can explicitly construct the bijection $\theta$. We can
obtain its inverse in the same way, but for the sake of simplicity, we omit
details.\medskip

For all $d\,|\,n$, call $U_d$ the smallest positive integer such that
\begin{equation}
  \label{defu}
  \forall e\,|\,d,\ \forall f\,|\,d \text{ with } e\ne f,\ \gcd\left(\Phi_e(q),\Phi_f(q),\frac{q^d-1}{U_d}\right) = 1.
\end{equation}
For $e\,|\,d\,|\,n$, let furthermore
$y_{d,e}=\gcd\left(\Phi_e(q),(q^d-1)/{U_d}\right)$ and
$z_{d,e}=\gcd(\Phi_e(q),$ $U_d)$.  Let finally $w_d$, $w_{d,e}$ and $u_{d,e}$,
$v_{d,e}$ be the
coefficients in B\'ezout's relations
\begin{equation} \label{eq:3}
  \frac{q^d-1}{U_d} w_d + \sum_{e\,|\,d} \frac{q^d-1}{y_{d,e}}w_{d,e} =
  1\ \text{ and }\ 
  \frac{\Phi_e(q)}{y_{d,e}} u_{d,e} + \frac{\Phi_e(q)}{z_{d,e}} v_{d,e} = 1\,.
\end{equation}
With the notations above, we have the following bijections, for all $d\,|\,n$,
\begin{displaymath} 
  \FFx{q^d} \xrightarrow{\sim} \mZ/U_d\mZ \times \prod_{e\,|\,d} \mZ/y_{d,e}\mZ
  \ \text{ and }\ 
  \mZ/U_d\mZ \xrightarrow{\sim} \prod_{e\,|\,d} \mZ/z_{d,e}\mZ\,.
\end{displaymath}
These two successive bijections give a full decomposition of each $\FF{q^d}$
into 
\begin{displaymath}
  \left(\prod_{e \,|\, d} \mZ/y_{d,e}\mZ \right) \times
  \left(\prod_{e\,|\,d} \mZ/z_{d,e}\mZ\right)\,.
\end{displaymath}

The first bijection is a canonical bijection given by the Chinese remainder
theorem, whereas the second one is non-canonical and can be performed by a
table lookup. Van Dijk and Woodruff have proved that these tables are of reasonable
size when some technical conditions are satisfied by $n$ and $q$, mainly $n$
being a product of distinct primes and $q$ of maximal order modulo these
primes.  \medskip

The idea is now to give a decomposition of both sides of the bijection
$\theta$ and to identify the small groups on each sides. The same groups
appear in a different order, except $\TT{n}$ which is mapped into
$\mZ/y_{n,n}\mZ \times \mZ/Z_{n,n}\mZ$.  For each $d\,|\,n$, $d\neq n$, we
identify $\prod_{e \,|\, d} \mZ/z_{d,e}\mZ \longrightarrow \prod_{e \,|\, d}
\mZ/z_{\rho_e(d),e}\mZ$ where $\rho_e$ is the bijection 
\begin{displaymath}
\rho_e: \{d: e\,|\,d\,|\,n,
\mu(n/d)=1\}\xrightarrow{\sim} \{d: e\,|\,d\,|\,n, \mu(n/d)=-1\}\,.
\end{displaymath}
All in all, we obtain Algorithm~\ref{algotheta}.

\begin{algorithm}[ht]
\KwIn{$x\in \TT{n}$ and $x_d\in \FFx{q^d}$ for all $d\,|\,n$ such that
$\mu(n/d)=-1$.}

\KwOut{$x_d\in \FFx{q^d}$ for all $d\,|\,n$ such that
  $\mu(n/d)=1$.\\\ \\}

\ForEach{$d\,|\,n$ {\rm such that} $\mu(n/d)=-1$}{ \small

  Compute $x_d \mapsto x_d^{(q^d-1)/U_d}$, the canonical map $\FFx{q^d} \to
  \mZ/U_d\mZ$\,.

  Compute $x_d^{(q^d-1)/U_d} \mapsto (Z_{d,e})_{e\,|\,d}$, the table lookup
  $\mZ/U_d\mZ \to \prod_{e \,|\, d} \mZ/z_{d,e}\mZ$\,.

  Map $(Z_{d,e})_{e\,|\,d} \mapsto (Z_{\rho_e(d),e})_{e\,|\,d}$ with
  $Z_{\rho_e(d),e} =
  (Z_{d,e}^{v_{d,e}}x_d^{(q^d-1)u_{d,e}/y_{d,e}})^{\Phi_e(q)/z_{\rho_e(d),e}}$,
  that is map $\prod_{e \,|\, d} \mZ/z_{d,e}\mZ \to \prod_{e \,|\, d}
  \mZ/z_{\rho_e(d),e}\mZ$\,.
}
Compute $Z_{n,n} = x^{\Phi_n(q)/z_{n,n}}\in \mZ/ z_{\rho(n),n}\mZ$.

\ForEach{$d\,|\,n$ {\rm such that} $\mu(n/d)=1$}{ \small

  Compute $(Z_{d,e})_{e\,|\,d} \mapsto Z_d$, the table lookup
  $\prod_{\substack{\rho_e(d')=d,e \,|\, d\\e\neq d}} \mZ/z_{d',e}\mZ \to
  \mZ/U_d\mZ$\,.

  Compute $x_d = Z_d^{w^d} \prod_{\substack{\rho_e(d')=d,e \,|\, d\\e\neq d}}
  (Z_{d',e}^{v_{d',e}}
  x_{d'}^{(q^{d'}-1)u_{d',e}/y_{d',e}})^{\Phi_e(q)w_{d,e}/y_{d,e}} \in
  \FFx{q^d}$.

}

Multiply $x_n$ by $x^{\Phi_n(q)w_{n,n}/y_{n,n}}$.

\caption{Computation of $\theta$.}
\label{algotheta}
\end{algorithm}
\bigskip

\noindent
\textbf{Example.} We focus again on the case $n=15$, with $U_d=1$ for all
$d\,|\,n$ which gives good insights of what actually happens. We sketch the
construction on Fig.~\ref{ex15}.

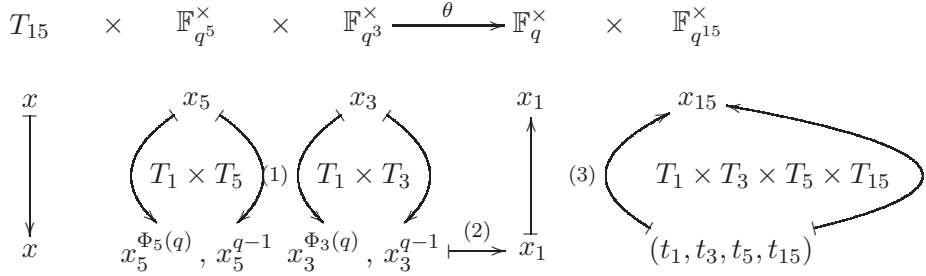
\begin{figure}[ht]
\begin{displaymath}
  \xymatrix @!0 @R=1cm @C=11.1mm {
    T_{15}&\times&\FFx{q^5}&\times&\FFx{q^3}\ar[rr]^{\theta}&&
    \FFx{q}&\times&\FFx{q^{15}}\\
    x\ar@{|->}[dd]&&x_5\ar@/_25pt/@{|->}[dd]\ar@/^25pt/@{|->}[dd]&&
    x_3\ar@/_25pt/@{|->}[dd]_{(1)}\ar@/^25pt/@{|->}[dd]&&x_1&&x_{15}\\
    &&T_1\times T_5&&T_1\times T_3&&&&
    \hspace{2cm}T_1\times T_3\times T_5\times T_{15}\\
    x&&x_5^{\Phi_5(q)}\, ,\, x_5^{q-1} &&x_3^{\Phi_3(q)}\, ,\,
    x_3^{q-1}\ar@{|->}[rr]^-{(2)}&&\,x_1\ar@{|->}[uu]&&\hspace{25pt}
    (t_1,t_3,t_5,t_{15})\ar@{|->}@/^35pt/[uu]^{(3)}\ar@{|->}@/_85pt/[uu]
  }
\end{displaymath}
\caption{The bijection $\theta$ for $n=15$ and $U_1=U_3=U_5=U_{15}=1$.}
\label{ex15}
\end{figure}

We have here several simplifications. For every $e\,|\,d$, $y_{d,e} =
\Phi_e(q)$ and $z_{d,e} =1$. Then the groups $\mZ/y_{d,e}\mZ$ involved are
nothing but the tori $\TT{e}$. Besides $u_{d,e} = 1$ and $v_{d,e} =0$.
Eq.~\eqref{eq:3} becomes
\begin{math}
  \sum_{e\,|\,d} \frac{q^d-1}{\Phi_e(q)}w_{d,e} =1\,,
\end{math}
and $x_{15}$ is simply given by $x_{15} =
t_1^{w_{15,1}}t_3^{w_{15,3}}t_5^{w_{15,5}}t_{15}^{w_{15,15}}$.

An explicit computation shows that the $w_{15,e}$'s have a convenient common
denominator, namely $15$.  So, $x_{15} =
(t_1^{r_1}t_3^{r_3}t_5^{r_5}t_{15}^{r_{15}})^{1/15}$, where the $r_e$'s are
convenient polynomials in $q$,
\begin{displaymath}
  \begin{cases}
    r_1 = 1,\\
    r_3 = -q-2,\\
    r_5 = -{q}^{3}-2\,{q}^{2}-3\,q-4,\\
    r_{15} = {q}^{7}-3\,{q}^{5}+4\,{q}^{4}-5\,{q}^{3}+7\,q-8.\\
  \end{cases}
\end{displaymath}
The cost is as follows (\textit{cf.}  Fig.~\ref{ex15}).
\begin{description}
\item[Phase (1)\,] Exponentiations to the powers $q-1$, $\Phi_3(q) = q^2+q+1$ and
  $\Phi_5(q)=q^4+q^3+q^2+q+1$ cost in average, respectively, $\frac{1}{2} \log
  q$, $\frac{1}{2} (2\log q)$ and $\frac{1}{2} (4\log q)$ multiplications
  since we perform exponentiations to power of the sizes $q$, $q^2$ and $q^4$.
\item[Phase (2)\,] Negligible.
\item[Phase (3)\,] Recall the expressions of the $r_e$'s. Exponentiation to these powers
  demands in average $\deg r_e \times (\frac{1}{2} \log q)$. So
  altogether: $(0+1+3+7) \times (\frac{1}{2} \log q)$.
\end{description}
This elementary calculation shows that, in average, the cost is about $9
\log q$ multiplications in $\FF{q^{15}}$, that is $\log^{2+o(1)} q$ elementary
operations. Van Dijk and Woodruff propose some insights to improve this cost
in practice (multi-exponentiations, redundancies, \textit{etc.}), but the
asymptotic complexity remains quasi-quadratic in $\log q$.

\subsection{Computational Complexities}
\label{sec:comp-compl}

We can now state more precisely the complexity of
Algorithm~\ref{algotheta}.\medskip

We first construct an irreducible polynomial $P(X)$ of degree $n$ over
$\FF{q}$, which can be done in $n^{2+o (1)}\log^{2+o(1)}q$
operations~\cite{PaRi98}. Let $\alpha=X \bmod P(X)$.  Then
$(1,\alpha,\hdots,\alpha^{n-1})$ is an $\FF{q}$-basis of $\FF{q^n}$.
Additions, subtractions and comparisons require $O(n\log q)$ elementary
operations. Multiplications and divisions require $n^{1+o(1)}\log^{1+o(1)} q$
elementary operations.\medskip

We also have to handle basis changes between $\FF{q^n}$ and its subfields
$\FF{q^d}$. There are $d(n)$ such subfields, where $d(n)$ is the divisor
function. This may yield large finite field lattices (see
Fig.~\ref{fig:lattice} for an example).  To simplify things, and since it does
not change the complexity, we consider that $\FF{q^d}$ elements for $d\,|\,n$
are given in the basis $(1,\alpha,\hdots,\alpha^{n-1})$ too. So, we can easily
multiply elements given in two distinct subfields. Just, in order to obtain
the right dimensions for inputs or outputs of the algorithm, we apply to an
$\FF{q^d}$ element given in $\FF{q^n}$ an $\FF{q}$-linear compression derived
from equations of the type $x^{q^d} = x$. This yields matrices $A_{n,d}\in
\boM_{n,d}(\FF{q})$ for the embedding $\FF{q^d} \embed \FF{q^n}$. Building and
applying such a matrix costs at most $n^{3}$ multiplications in
$\FF{q}$. Since there are $d(n)\simeq n^{o(1)}$ of them, this yields a total
cost of $n^{3+o(1)}\log^{1+o(1)} q$ bit operations.\medskip

Van Dijk and Woodruff outline that for ``reasonable'' integers $n$ and $q$,
mainly $n$ a product of distinct primes and $q$ of maximal order modulo these
primes, table lookup costs are negligible and the main costs are Step 4 and
Step 9 of the algorithm. They involve exponents which are derived from
cyclotomic polynomials.  Computing $\Phi_n$ can be done in time essentially
equal to its size (start from complex floating point approximations of
primitive $n$-th roots of unity and reconstruct $\Phi_n$ from these roots). We
know that this is a polynomial of degree $\varphi(n)$ with coefficients
upperbounded by $n^{d(n)/2}$~\cite{Erdoes46,Bateman49}, that is a size of at
most $n^{1+o(1)}$ bits. Evaluating all the $\Phi_d$'s at $q$ yields exponents
with $d \log q$ bits and can be done with $n^{2+o(1)}\log^{1+o(1)} q$
elementary operations. Using finally the approximate growth rate
$\sum_{d\,|\,n} d \simeq n^{1+o(1)}$, the total cost of Step 4 and Step 9 is
equal to $n^{3+o(1)}\log^{2+o(1)} q$.

\begin{figure}[ht]
  \centering
  \begin{math}
    \xymatrix @!0 @R=1.2cm @C=15mm {
      &\FF{q^{ef}}\ar@{-}[rrrr]&&&&
      \FF{q^n}\\
      \FF{q^{e}}\ar@{-}[ur]\ar@{-}[rrrr] 
      &&&&\FF{q^{ed}}\ar@{-}[ur]\\
      &\FF{q^{f}}\ar@{-}[uu]|!{[u];[ul]}\hole
      \ar@{-}[rr]|!{[dl];[uurrrr]}\hole
      \ar@{-}@/^1,8pc/[uurrrr]|!{[ul];[urrr]}\hole
      &&{}\ar@{-}[r]
      &{}\ar@{-}[r]|!{[d];[uur]}\hole
      &\FF{q^{df}}\ar@{-}[uu]\\
      \FF{q}\ar@{-}[uuur]|!{[uu];[uurrrr]}\hole\ar@{-}@/_1,8pc/[uurrrr]\ar@{-}@/^0,1pc/[uuurrrrr]|!{[uu];[uurrrr]}\hole\ar@{-}[uu]\ar@{-}[rrrr]\ar@{-}[ur]&&&&\FF{q^d}\ar@{-}[uu]\ar@{-}[ur]\ar@{-}[uuur] 
    }
  \end{math}  
  \caption{Finite field lattices for $n=def$, a product of three distinct
    primes.}
  \label{fig:lattice}
\end{figure}
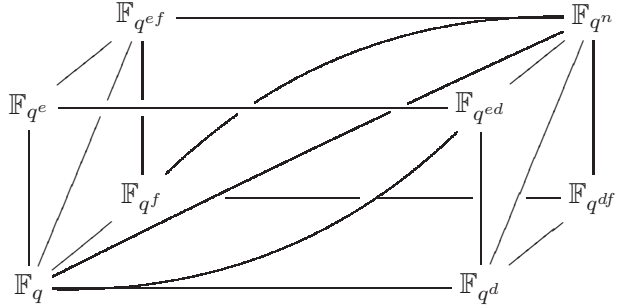

\section{Elliptic Periods and Algebraic Tori}
\label{sec:ellipt-peri-algebr}

We now focus on the case $U_d=1$ for every $d\,|\,n$.  That is no big
restriction, at least for cryptographic purposes. Indeed
Lemma~\ref{lemma:manyqandn} in Section~\ref{sec:restrictions-n-q} shows that
we can find infinitely many values of $q$ for infinitely many values of $n$
working.

We observe in Section~\ref{sec:compl} that most of the exponentiations
occuring in Algorithm~\ref{algotheta} involve exponents with a sparse
decomposition in basis $q$. This yields interests for handling $\FF{q^n}$ with
a normal basis $(\alpha,\alpha^q, \hdots,\alpha^{q^{n-1}})$ instead of a power
basis $(1,\alpha,\hdots,\alpha^{n-1})$, since with such a choice $q$-th powers
become inexpensive. Since we need to multiply elements of $\FF{q^n}$ in
quasi-linear time too, normal elliptic basis are a natural choice that we
introduce in Section~\ref{sec:norm-ellipt-bases}.

\subsection{Restrictions on $n$ and $q$}
\label{sec:restrictions-n-q}

For squarefree integers $n$, we can prove the following result.

\begin{lemma}\label{lemma:manyqandn}
  For infinitely many squarefree integers $n$, there are infinitely many
  values of $q$ such that $U_d=1$ for all $d\,|\,n$.
\end{lemma}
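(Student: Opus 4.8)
The statement reduces to understanding when $U_d=1$. Recall from \eqref{defu} that $U_d=1$ exactly when, for all distinct $e,f\,|\,d$, one has $\gcd(\Phi_e(q),\Phi_f(q))=1$. So the condition "$U_d=1$ for all $d\,|\,n$" is equivalent to the single cleaner condition: $\gcd(\Phi_e(q),\Phi_f(q))=1$ for every pair of distinct divisors $e,f$ of $n$. The plan is therefore to (i) identify the obstruction to such coprimality in terms of a classical fact about cyclotomic polynomials evaluated at integers, and then (ii) exhibit an explicit infinite family of $(n,q)$ avoiding that obstruction.

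For step (i), the key classical input is the following: if $e<f$ are both divisors of $n$ and a prime $p$ divides both $\Phi_e(q)$ and $\Phi_f(q)$, then $f/e$ is a power of $p$ and moreover $p\,|\,f$ (this is the standard "intrinsic vs. extrinsic prime factors of cyclotomic values" lemma; one proves it by noting that $p\mid\Phi_e(q)$ forces the multiplicative order of $q$ mod $p$ to be $e$, so $q$ has order $e$ and also order $f$ in $(\mZ/p)^\times$ only if $f=e$, unless $p\mid f$ in which case the order can drop). In particular, for squarefree $n$, if two distinct divisors $e\mid f\mid n$ share a common prime factor of their cyclotomic values, that prime must be one of the primes dividing $n$, and $f/e$ must be exactly that prime. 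So the only primes that can ever cause trouble are the (finitely many) primes $p_1,\dots,p_k$ dividing $n$, and for each such $p_i$ the trouble occurs only between divisors $e$ and $p_i e$.

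For step (ii), I would fix $n$ squarefree, say $n=p_1\cdots p_k$, and for each prime $p_i\mid n$ impose a congruence condition on $q$ that prevents $p_i$ from dividing $\Phi_{p_i e}(q)$ simultaneously with $\Phi_e(q)$. Concretely, $p_i\mid\Phi_{e}(q)$ together with $p_i\mid\Phi_{p_i e}(q)$ forces $q$ to have order $e$ modulo $p_i$ and additional $p_i$-adic constraints; it suffices to require that $p_i^2\nmid q^{p_i e}-1$ for the relevant $e$, or more simply to choose $q$ so that $q$ does \emph{not} have order a divisor of $n$ modulo any $p_i$ — e.g. ask that $q$ be a primitive root modulo each $p_i$ and $p_i^2\nmid q^{p_i-1}-1$ (a non-Wieferich-type condition). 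Each of these is a nonempty union of residue classes modulo $p_i$ (resp. $p_i^2$), so by the Chinese Remainder Theorem and Dirichlet's theorem on primes in arithmetic progressions, there are infinitely many prime powers (indeed infinitely many primes) $q$ satisfying all of them simultaneously. That gives infinitely many $q$ for this fixed $n$. Finally, letting $k\to\infty$ (take $n$ to be the product of the first $k$ primes) yields infinitely many squarefree $n$, each with its own infinite family of valid $q$.

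The main obstacle is step (i): pinning down precisely which primes can divide $\gcd(\Phi_e(q),\Phi_f(q))$ and showing the exceptional prime must divide $n$ with $f/e$ equal to that prime — everything afterwards is a routine CRT-plus-Dirichlet packaging argument. One must be slightly careful with the $p=2$ case and with the $p$-adic valuation bookkeeping (the "lifting the exponent" behaviour of $v_p(q^m-1)$), but these are standard and do not affect the existence of infinitely many admissible $q$.
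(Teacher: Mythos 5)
Your step (i) is essentially correct and parallels the paper's use of Apostol's resultant formula: for a squarefree $n$ and distinct divisors $e\mid f$ of $n$, the only prime that can divide $\gcd(\Phi_e(q),\Phi_f(q))$ is $p=f/e$, so the obstruction reduces to finitely many pairs $(e,pe)$ with $p\mid n$. But step (ii) has two genuine errors.

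First, the ``primitive root mod $p_i$'' condition is the wrong one, and in some cases is exactly backwards. The constraint coming from the pair $(e,pe)$ is that the multiplicative order of $q$ modulo $p$ must \emph{not} equal $e$; ranging over $e\mid n/p$, you must arrange $\mathrm{ord}_p(q)\nmid n/p$. Demanding $q$ be a primitive root mod $p$ pins $\mathrm{ord}_p(q)=p-1$, which is forbidden whenever $p-1\mid n/p$. For instance with $p=3$ and $2\mid n$, your condition forces $q\equiv-1\pmod3$, and then $3\mid\gcd(\Phi_2(q),\Phi_6(q))$ since $\gcd(q+1,q^2-q+1)=\gcd(q+1,3)$. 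Relatedly, the non-Wieferich side condition is irrelevant: if $\mathrm{ord}_p(q)=e$ then $p\mid\Phi_{pe}(q)$ automatically whatever the higher $p$-adic valuation of $q^e-1$ is, so the only thing that matters is the mod-$p$ order of $q$, not a Wieferich-type lifting.

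Second, and more seriously, the family $n=$ product of the first $k$ primes does not work for any $k\geqslant2$: whenever $6\mid n$ there is \emph{no} admissible $q$ at all. Indeed $\gcd(\Phi_1(q),\Phi_2(q))=\gcd(q-1,q+1)=1$ forces $q$ even, hence a power of $2$; but $\gcd(\Phi_1(q),\Phi_3(q))=\gcd(q-1,3)=1$ and $\gcd(\Phi_2(q),\Phi_6(q))=\gcd(q+1,3)=1$ jointly force $q\equiv0\pmod3$, contradicting $q=2^a$. So you must avoid $n$ divisible by $6$ (more generally any $n$ for which the residue classes are jointly incompatible). The paper's fix is to take $n=pr$ a product of two odd primes chosen so that $r\not\equiv1\pmod p$ and $p\not\equiv1\pmod r$ (for example twin primes $r=p+2$); then the pairs $(p,pr)$ and $(r,pr)$ impose no condition, because $\Phi_p\bmod r$ and $\Phi_r\bmod p$ have no roots, and one is left only with the benign constraints $q\not\equiv1\pmod p$ and $q\not\equiv1\pmod r$, which Dirichlet handles.
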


\begin{proof}
  From Eq.~\eqref{defu}, we deduce
  \begin{equation}
    U_d = 1 \Leftrightarrow \forall e\,|\,d,\ \forall f\,|\,d\ e\neq f,\ 
    \gbpgcd (\Phi_e(q), \Phi_f(q)) =1\,.
    \label{eq:5}
  \end{equation}
  The right hand side condition is always satisfied when $\Reslt (\Phi_e,
  \Phi_f) = 1$ and it is widely known that this is equivalent to the condition
  $f \neq e\,p^{i}$ with $p$ prime and $i\geqslant 1$ (see~\cite{Dunand09} for a
  proof). This is a corollary of the following formula due to
  Apostol~\cite{Apostol70}, for $f>e>1$,
  \begin{equation}
    \Reslt(\Phi_f,\Phi_e) = \prod_{\substack{d\,|\,e\\
        p \text{ prime},\, \frac{f}{(f,d)} = p^{i}}}
    p^{\mu(e/d)\frac{\varphi(f)}{\varphi(p^i)}}\,.
    \label{eq:4}
  \end{equation}

  There remains to check that when $f = e\,p^{i}$, there exist integers $q$
  such that Eq.~\eqref{eq:5} is satisfied. Since $n$ is supposed to be
  squarefree, the only cases are $f=ep$, $p$ prime.
  \begin{description}
  \item[Case $e=1$] The divisor $f$ is then equal to the prime $p$ and
    $\Reslt(\Phi_1,\Phi_f)=f$. In order to have $\gbpgcd (\Phi_e(q),
    \Phi_f(q)) =1$, $q$ must not be a common root of $\Phi_e$ and $\Phi_f$
    modulo $f$. In other words, we must have $q\not\equiv 1 \bmod f$.\medskip
  \item[Case $e>1$] The divisor $f$ is then equal to $pe$ where $p$ is a
    prime.  Since $e$ is squarefree, we know from Eq.~\eqref{eq:4} that
    $\Reslt (\Phi_e,\Phi_{pe}) = p^{\varphi(e)}$. So, $q$ must not be a common
    root of $\Phi_e$ and $\Phi_{pe}$ modulo $p$. Modulo $p$, $\Phi_e$ have a
    decomposition into irreducible polynomials of same degree, and this degree is
    equal to $p\bmod e$ (\textit{cf.} \cite{LiNi83}). In other words, $\Phi_e$
    and $\Phi_{pe}$ can only have a common root when $p\equiv 1 \bmod e$. In
    this case, $q$ must not be one of the $\varphi(e)$ roots of $\Phi_e$
    modulo $p$.
\end{description}

The restrictions above leave infinitely many possibilities for $q$, at least
for infinitely many values of $n$.  For instance let $n = p(p+2)$ be the
product of two twin primes and $q$ such that $q\not\equiv 1 \bmod p$ and
$q\not\equiv 1 \bmod (p+2)$.  Besides since $p+2 \not\equiv 1 \bmod p$, all
the conditions above are satisfied. Thus we have a infinite family of numbers
$q$ suitable for each $n$, and an infinite number of possible values for $n$
itself.
\end{proof}

\subsection{Normal Elliptic Basis}
\label{sec:norm-ellipt-bases}
We mimic here Couveignes and Lercier's construction.\medskip

Let $E/\FF{q}$ be an elliptic curve given by some Weierstrass model
\begin{displaymath}
  Y^2Z+a_1XYZ+a_3YZ^2=X^3+a_2X^2Z+a_4XZ^2+a_6Z^3\,.
\end{displaymath}
If $A$ is a point in $E(\FF{q})$, we denote by $\tau_A: E\rightarrow E$ the
translation by $A$.  We set $x_A=x\circ \tau_{-A}$ and $y_A=y\circ \tau_{-A}$.
If $A$, $B$ and $C$ are three pairwise distinct points in $E(\FF{q})$, we
define
\begin{displaymath}
  \Gamma(A,B,C)=\frac{y(C-A)-y(A-B)}{x(C-A)-x(A-B)}\,.
\end{displaymath}
We define a function $u_{A,B}\in \FF{q}(E)$ by $u_{A,B}(C)=\Gamma(A,B,C)$. It
has degree two with two simple poles, at $a$ and $b$.\medskip

We can prove the following identities (with Taylor expansions at poles),
\begin{equation}\label{eq:6}
  \left\{
  \begin{array}{rcl}
  \Gamma(A,B,C)&=&\Gamma(B,C,A)=-\Gamma(B,A,C)-a_1, \\
  &=&-\Gamma(-A,-B,-C)-a_1\,, \\
  u_{A,B}+u_{B,C}+u_{C,A}&=&\Gamma(A,B,C)-a_1\,,\\
  u_{A,B}u_{A,C} &=&x_A+\Gamma(A,B,C)u_{A,C}+\Gamma(A,C,B)u_{A,B}  \\
  &&+a_2
  +x_{A}(B)+x_{A}(C)\,,\\
  u_{A,B}^2&=&x_A+x_B-a_1u_{A,B}+x_A(B)+a_2\,.
\end{array}\right.
\end{equation}
  
Assume $E(\FF{q})$ contains a cyclic subgroup $\TSet$ of order $n$ and let $I:
E \rightarrow E'$ be the degree $n$ cyclic isogeny with kernel $\TSet$, then
the quotient $E'(\FF{q})/I(E(\FF{q}))$ is isomorphic to $\TSet$.

Take $A$ in $E'(\FF{q})$ such that $A\bmod I(E(\FF{q}))$ generates this
quotient.  The fiber $\PP=I^{-1}(A)=\sum_{T\in \TSet}[B+T]$ is an irreducible
divisor.  The $n$ geometric points above $A$ are defined on a degree $n$
extension of $\FF{q}$ (and permuted by Galois action),  that is $\FF{q^n}$
is the residue extension of $\FF{q}(E)$ at $\PP$.\medskip

For $k\in \mZ/n\mZ$, we set $u_k=\agot u_{kT,(k+1)T}+\bgot.$ ($\agot$ and
$\bgot$, constants chosen such that $\sum u_k = 1$).  Then the system $\Theta
= (u_k(B))_{k \in \mZ/n\mZ}$ is an $\FF{q}$ normal basis of $\FF{q^n}$.\medskip

Furthermore, there exists an algorithm with quasi-linear complexity to
multiply two elements given in an elliptic normal basis, mostly based on
Eq.~\eqref{eq:6}.  It consists in evaluations and interpolations at $d$ points
$R+kT$, where $R \in E(\FF{q})-E[n]$\,.\medskip

All of these yields Theorem~\ref{th:cole09}.

\begin{theorem}[\cite{CoLe09}]\label{th:cole09}
  To every couple $(q,n)$ with $q$ a prime power and $n\geqslant 2$ an integer
  such that $n_q \leqslant \sqrt{q}$, one can associate a normal basis
  $\Theta(q,n)$ of the degree $n$ extension of $\FF{q}$ such that the
  following holds.
  \begin{itemize}
  \item There exists an algorithm that multiplies two elements given in
    $\Theta(q,n)$ at the expense of $n^{1+o(1)}\log^{1+o(1)} q$ elementary
    operations.
  \end{itemize}
\end{theorem}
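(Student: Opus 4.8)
The plan is to realize $\Theta(q,n)$ concretely as the elliptic normal basis described just before the statement, and then to turn the quadratic identities~\eqref{eq:6} into a quasi-linear multiplication routine. So the proof splits into an \emph{existence/structure} part and an \emph{algorithmic} part.

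\smallskip

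\textbf{Existence and normality.} First I would use the hypothesis $n_q\leqslant\sqrt q$ to produce an elliptic curve $E/\FF q$ whose rational group $E(\FF q)$ contains a cyclic subgroup $\TSet$ of order $n$; this is where one invokes the description of the possible group structures of $E(\FF q)$ (Waterhouse/Deuring-type results: $E(\FF q)\simeq \mZ/d_1\mZ\times\mZ/d_2\mZ$ with $d_1\mid d_2$, $d_1\mid q-1$, and $d_1d_2$ in the Hasse interval), the point being that $n_q\leqslant\sqrt q$ is precisely the arithmetic condition allowing $n$ to divide the exponent $d_2$ for some such curve. Granting $E$, $\TSet$ and a generator $T$, I would form the degree $n$ cyclic isogeny $I\colon E\to E'$ with kernel $\TSet$, pick $A\in E'(\FF q)$ generating $E'(\FF q)/I(E(\FF q))\simeq\TSet$, and take the fiber $\PP=I^{-1}(A)$; the torsor structure of $\PP$ under $\TSet$ together with $A$ generating the quotient forces the $n$ geometric points $B+T$ of $\PP$ to form a single Frobenius orbit, so the residue field at $\PP$ is $\FF{q^n}$ and Frobenius permutes them by a translation $B+T\mapsto B+T+T_0$ with $T_0$ a generator of $\TSet$. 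Since the functions $u_k=\agot\,u_{kT,(k+1)T}+\bgot$ are defined over $\FF q$ and satisfy $u_k\circ\tau_{T_0}=u_{k+j}$ for a fixed shift $j$, the family $\Theta=(u_k(B))_k$ is stable under Frobenius and permuted cyclically; it therefore suffices to prove it is an $\FF q$-basis. For linear independence I would argue by Riemann--Roch on $E$: a vanishing combination $\sum c_k u_k(B)=0$ means $\sum c_k u_k$ vanishes along $\PP$, but $\sum c_k u_k$ has poles of total degree at most $2n$ supported on $\{B+T\}$, and a dimension count shows no nonzero such function can vanish to the required order along the degree-$n$ divisor $\PP$; this pins the kernel down to $0$, hence $\Theta$ is a normal basis.

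\smallskip

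\textbf{Quasi-linear multiplication.} Represent $x\in\FF{q^n}$ by its coordinate vector $(x_k)$ in $\Theta$, i.e.\ $x=g(B)$ with $g=\sum_k x_k u_k$. Fix an auxiliary point $R\in E(\FF q)\setminus E[n]$ and consider the evaluation map $(x_k)\mapsto\bigl(g(R+kT)\bigr)_k$. By the explicit form $\Gamma(A,B,C)=\bigl(y(C-A)-y(A-B)\bigr)/\bigl(x(C-A)-x(A-B)\bigr)$, the matrix entry $u_j(R+kT)$ depends only on $k-j\bmod n$, so this evaluation \emph{is a cyclic convolution}; likewise its inverse. Hence both can be computed in $n^{1+o(1)}$ operations in $\FF q$, that is $n^{1+o(1)}\log^{1+o(1)}q$ bit operations, using fast polynomial multiplication (Fourier/Sch\"onhage). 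To multiply $x$ and $x'$, evaluate both at the points $R+kT$ (taking a second coset $R'+kT$ if more points are needed), multiply the value vectors pointwise to obtain the values of $g g'$, and then re-expand $g g'$ in $\Theta$: here $g g'$ has poles of order at most $2$ along $\{B+T\}$ and, by the quadratic identities in~\eqref{eq:6} (the formulas for $u_{A,B}u_{A,C}$ and $u_{A,B}^2$), lies in an explicit $O(n)$-dimensional space whose basis is again built from translates of $u_{A,B}$-type functions; the passage from values to these coordinates, and then to the $\Theta$-coordinates, is once more governed by circulant-type matrices and so stays within the quasi-linear budget. All auxiliary data ($E$, $\TSet$, $I$, $A$, $\PP$, the constants $\agot,\bgot$, the points $R$, the convolution kernels) are computed once and for all in a precomputation whose cost is irrelevant to the stated bound.

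\smallskip

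\textbf{Main obstacle.} The delicate point is not the existence of $E$ (a direct consequence of $n_q\leqslant\sqrt q$ and the classification of group orders) but showing that the whole multiplication law collapses to convolutions plus pointwise products. Concretely one must (i) choose the indexings of the $u_k$ and of the evaluation points $R+kT$ so that every transition matrix is genuinely cyclic, and (ii) check that re-expressing the degree-$\leqslant 2$ product $g g'$ in $\Theta$ via~\eqref{eq:6} only ever requires inverting circulant-type operators rather than generic $n\times n$ systems. Step (ii) --- extracting a sequence of convolutions from the bilinear structure constants coming from~\eqref{eq:6} --- is where I expect the real work to lie.
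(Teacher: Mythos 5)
The paper does not prove this theorem: it is imported from Couveignes--Lercier~\cite{CoLe09}, preceded only by a bare-bones description of the construction (the functions $u_{A,B}$ and $\Gamma$, the identities~\eqref{eq:6}, the isogeny $I$, the fiber $\PP$, the candidate basis $\Theta=(u_k(B))_k$, and the one-line remark that multiplication ``consists in evaluations and interpolations at $d$ points $R+kT$''). Your proposal reconstructs exactly that route --- elliptic normal basis, evaluation at the translates $R+kT$ as a cyclic convolution, pointwise product, re-expansion via~\eqref{eq:6} --- and correctly isolates both the load-bearing observations (Frobenius shifts the $u_k(B)$ cyclically; $u_j(R+kT)$ depends only on $k-j\bmod n$) and the genuine difficulty, namely extracting circulant structure from the re-expansion step. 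So the overall approach matches the paper's sketch, and presumably the cited reference.

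One slip in your normality argument is worth fixing: the poles of $\sum_k c_k u_k$ sit at the $\FF{q}$-rational torsion points $kT\in\TSet$, \emph{not} at the geometric points $B+T$ of $\PP$; and since $u_{k-1}$ and $u_k$ share the simple pole at $kT$, the pole divisor has degree at most $n$, not $2n$. With a pole divisor of degree $n$ and a zero divisor $\PP$ of the same degree $n$, a pure Riemann--Roch dimension count does not by itself exclude a nonzero function with divisor exactly $\PP-\sum_{k}[kT]$; one further needs the Abel--Jacobi obstruction (that divisor is principal if and only if $nB=O$ in $E$, since $\sum_{T\in\TSet}(B+T)-\sum_{T\in\TSet}T=nB$) together with an argument that even such a function cannot lie in the $\FF{q}$-span of the $u_k$. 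This is a gap in your reconstruction, not in the paper, which simply defers the full argument to~\cite{CoLe09}.
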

\noindent
Here $n_q$ is such that
\begin{itemize}
\item $v_\ell(n_q)=v_\ell(n)$ if $\ell$ is prime to $q-1$,
  $v_\ell(n_q)=0$ if $v_\ell(n)=0$,
\item $v_\ell(n_q)=\max(2v_\ell(q-1)+1, 2v_\ell(n))$ if $\ell$
  divides both $q-1$ and $n$.
\end{itemize}\bigskip

This result can be easily extended to a result without any restriction on
$q$ and $n$ (see \cite{CoLe09}).

\subsection{Van Dijk and Woodruff's Encoding Revisited}
\label{sec:compl}

Since $U_d=1$ for all $d\,|\,n$, van Dijk and Woodruf's encoding can be
slightly simplified. It is not only a bijection, but also a group
isomorphism.\medskip

For every $e\,|\,d$, $y_{d,e} = \Phi_e(q)$ and $z_{d,e} =1$. Then the groups
$\mZ/y_{d,e}\mZ$ involved are nothing but the tori $\TT{e}$. Besides $u_{d,e}
= 1$ and $v_{d,e} =0$.  So most of Algorithm~\ref{algotheta} is reduced to two
main phases: the decomposition $\FFx{q^d} \to \prod_{e\,|\,d} \TT{e}$ for $d$
any divisor of $n$ such that $\mu(n/d)=-1$ on the left hand side and the
reconstruction $\prod_{e\,|\,d} \TT{e} \to \FFx{q^{d}}$ for $d$ any divisor
of $n$ such that $\mu(n/d)=1$ on the right hand side.\medskip

Now we need to know what we gain with a normal elliptic basis. Essentially, it
makes each exponentiation to a power of $q$ be a simple permutation of the
basis. We thus gain a $\log q$ factor for each exponentiation of this type. It
is not difficult to see that the exponents occuring in the decomposition phase
have a sparse decomposition in basis $q$ since they are products of
evaluations of cyclotomic polynomials at $q$. But the reconstruction phase is
more tricky because it involves exponentiations by B\'ezout's coefficients
$w_{d,e}$ which do not have such a nice decomposition in basis $q$.  Instead, we
prefer to compute B\'ezout's polynomials $W_{d,e}$ such that
\begin{displaymath}
  \sum_{e\,|\,d} \frac{X^d-1}{\Phi_e(X)}W_{d,e}(X) = 1\,.
\end{displaymath}
Of course, $w_{d,e} = W_{d,e}(q) \bmod \Phi_e(q)$ \,. 

Unlike cyclotomic polynomials, these polynomials do not have integer
coefficients, but for squarefree integers $n$, and thus squarefree divisors
$d$, all their coefficients have a common denominator, equal to $d$.  More
precisely, we have
\begin{equation}\label{eq:8}
 W_{d,e}(X) = \prod_{f\,|\,d, f\neq e} \Phi_f(X)^{-1} \bmod \Phi_e(X)\,. 
\end{equation}
We may notice on the first hand that $\Phi_f(X)^{-1} \bmod \Phi_e(X)$ has got
integer coefficients if and only if $f \neq e\,p^{i}$ with $p$ prime and
$i\geqslant 1$, since $\Reslt (\Phi_e, \Phi_f) = 1$ in that case (see proof of
Lemma~\ref{lemma:manyqandn}). On the other hand, when $f = e\,p^{i}$, the
coefficients of $\Phi_f(X)^{-1} \bmod \Phi_e(X)$ have a common denominator, equal
to $f$. From Eq.~\eqref{eq:8}, and from the squarefree property satisfied by
$d$, we deduce thus that the coefficients of $W_{d,e}(X)$ have a common
denominator exactly equal to $d$. 

We observed that the numerators $R_{d,e}$ of the $W_{d,e}$'s have small
coefficients too (see Section~\ref{sec:case-n=pr-with} for a detailed analysis
in the case $n=pr$).  Consequently, we restrict $q$ to prime powers such that
$n$ is invertible modulo $q^n-1$ and slightly modify $\theta$ to output
$x_d^{n}$ instead of $x_d$ for each $d\,|\,n$ such that $\mu(n/d)=1$.  We
denote $\widetilde\theta$ this variant (\textit{cf.}
Algorithm~\ref{algothetatilde}).
\begin{algorithm}[ht]
\KwIn{$x\in \TT{n}$ and $x_d\in \FFx{q^d}$ for all $d\,|\,n$ such that
$\mu(n/d)=-1$.}

\KwOut{$x_d\in \FFx{q^d}$ for all $d\,|\,n$ such that
  $\mu(n/d)=1$.\\\ \\}

\ForEach{$d\,|\,n$ {\rm such that} $\mu(n/d)=-1$}{

  Compute $x_d \mapsto (Z_{\rho_e(d),e})_{e\,|\,d}$ with $Z_{\rho_e(d),e} =
  x_d^{(q^d-1)/\Phi_e(q)}$.

}

Set $Z_{n,n} = x $.

\ForEach{$d\,|\,n$ {\rm such that} $\mu(n/d)=1$}{

  Compute $x_d = \prod_{\substack{\rho_e(d')=d,e \,|\, d\\e\neq d}}
  Z_{d',e}^{n W_{d,e}(q)} \in \FFx{q^d}$.

}

\caption{Computation of $\widetilde\theta$.}
\label{algothetatilde}
\end{algorithm}
\bigskip

Fortunately, we do not need any more compression matrices $A_{n,d}$ with
normal basis (\textit{cf.} Section~\ref{sec:comp-compl}). In truth, a
$\FF{q^d}$ element has got a periodic set of components in any normal basis of
$\FF{q^n}$. Consequently, compressing simply consists in truncating to the $d$
first components and expanding consists in concatenating $n/d$ copies of a
$d$-tuple of $\FF{q}$ elements. Costs are negligible. \medskip

Before considering in detail the case $n=pr$ a product of two primes in
Section~\ref{sec:case-n=pr-with}, and discuss the general case in
Section~\ref{sec:case-integers-n}, we focus on an explicit example, namely
$n=15$ in order to compare with
Section~\ref{sec:an-explicit-version}.\bigskip

\noindent \textbf{Example.}  Recall Fig.~\ref{ex15} for the notations, the
costs are the following.
\begin{description}
\item[Phase (1)\,] Exponentiations to the powers $\Phi_3(q) = q^2+q+1$ and
  $\Phi_5(q)=q^4+q^3+q^2+q+1$ cost respectively 2 and 4 multiplications since
  exponentiation to a power of $q$ is free (mere permutation of the
  basis). Exponentiation to the power $q-1$ costs an inversion, which is
  performed in linear time.
\item[Phase (2)\,] Negligible.
\item[Phase (3)\,] Recall the expressions of the $r_e$'s. For instance $r_{15}
  = {q}^{7}-3\,{q}^{5}+4\,{q}^{4}-5\,{q}^{3}+7\,q-8$. Exponentiation to this
  power demands $6\times 3$ multiplications for the coefficients (6
  coefficients of size at most $2^3$) and 6 multiplications to add the 7
  monomials. The same calculation for each $r_e$ gives the global cost of
  Phase~(3): $3 + ((0)+(1\times1+1) + (2\times 2+2) + (6\times3+6))$
  multiplications and 3 inversions.
\end{description}

If we remind the total found for computations without normal elliptic bases,
it is a clear practical improvement.  The most important is that
asymptotically, the $\log q$ factor vanishes.\medskip

\subsubsection{Case $n=pr$ with $p,r$ distinct primes}
\label{sec:case-n=pr-with}

In the case $n=pr$ with $p,r$ distinct primes, the situation is very similar
to our $n=15$ example (\textit{cf.} Fig.~\ref{fig:case_pr}).\medskip

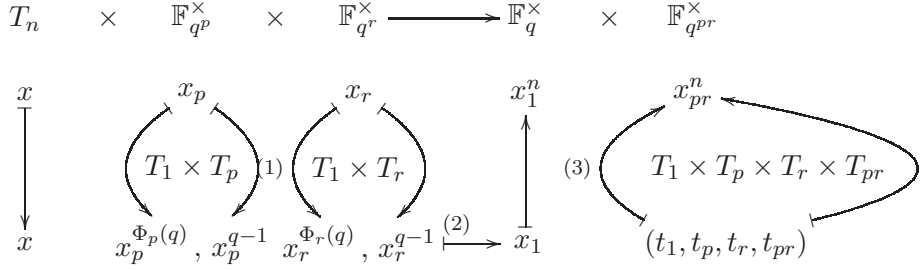
\begin{figure}[ht]
\[
\xymatrix @!0 @R=1cm @C=11.1mm {
T_n&\times&\FFx{q^p}&\times&\FFx{q^r}\ar[rr]&&
\FFx{q}&\times&\FFx{q^{pr}}\\
x\ar@{|->}[dd]&&x_p\ar@/_25pt/@{|->}[dd]\ar@/^25pt/@{|->}[dd]&&
x_r\ar@/_25pt/@{|->}[dd]_{(1)}\ar@/^25pt/@{|->}[dd]&&x_1^n&&x_{pr}^n\\
&&T_1\times T_p&&T_1\times T_r&&&&
\hspace{2cm}T_1\times T_p\times T_r\times T_{pr}\\
x&&x_p^{\Phi_p(q)}\, ,\, x_p^{q-1} &&x_r^{\Phi_r(q)}\, ,\, x_r^{q-1}\ar@{|->}[rr]^{\quad(2)}&&\,x_1\ar@{|->}[uu]&&\hspace{25pt} (t_1,t_p,t_r,t_{pr})\ar@{|->}@/^35pt/[uu]^{(3)}\ar@{|->}@/_85pt/[uu]
}
\]
\label{thetapr}
\caption{The bijection $\widetilde\theta$ for $n=pr$ and $U_1=U_p=U_r=U_{pr}=1$.}\label{fig:case_pr}
\end{figure}

Especially, the cost of Phase (1) comes from exponentiations to the powers
$\Phi_p(q)$ and $\Phi_r(q)$, that is $p$ and $r$ multiplications since
exponentiation to a power of $q$ is free. This costs $n^{2+o(1)}\log^{1+o(1)}
q$ bit operations. Exponentiation to the power $q-1$ costs an inversion, which
is asymptotically performed in quasi-linear time.\medskip

We now give details on the cost of Phase (3). We perform the embedding in two
steps. First, we combine $t_1$ and $t_{pr}$ on one hand and $t_p$ and $t_{r}$
on the other hand. Then, we combine the two results again to form the element
$x_{pr}$.  We summarize this process on Fig.~\ref{fig:reconstruct}.
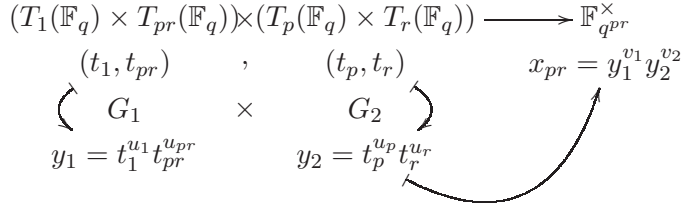
\begin{figure}[ht]
  \centering $\xymatrix @!0 @R=0.6cm @C=16mm {
    (\TT{1}\times \TT{{pr}})&\times &(\TT{p}\times \TT{r})\ar[rr]&&\FFx{q^{pr}}\\
    (t_1,t_{pr})\ar@{|->}@/_25pt/[dd]&,&(t_p,t_{r})\ar@{|->}@/^25pt/[dd]&&x_{pr} = y_1^{v_1}y_2^{v_2}\\
    G_1&\times&G_2\\y_1 = t_1^{u_1}t_{pr}^{u_{pr}}&&y_2
    =t_p^{u_p}t_{r}^{u_{r}}\ar@{|->}@/_35pt/[uurr] }$
  \caption{Reconstruction step in the case $n=pr$.}
  \label{fig:reconstruct}
\end{figure}

So the first step consists in two mappings,
\begin{displaymath}
\begin{array}[h]{rcl}
  \TT{1}\times \TT{{pr}} &\xrightarrow{\sim}& G_1 \subset \FFx{q^{pr}}\,,\\
  (t_1,t_{pr})&\mapsto & y_1 = t_1^{u_1}t_{pr}^{u_{pr}}\,,\end{array}\text{ where }
  \Phi_{pr}(q)u_1+\Phi_1(q)u_{pr} = 1\,
\end{displaymath}
and
\begin{displaymath}
  \begin{array}[h]{rcl}\TT{p}\times \TT{r} &\xrightarrow{\sim}& G_2 \subset \FFx{q^{pr}}\\
    (t_p,t_{r})&\mapsto & y_2 = t_p^{u_p}t_{r}^{u_{r}}\end{array}\text{ where }
  \Phi_{r}(q)u_p+\Phi_p(q)u_{r} = 1\,.
\end{displaymath}
The final recombination is 
\begin{displaymath}
  \begin{array}{rcl} G_1\times G_2&\to & \FFx{q^{pr}} \\
    (y_1,y_2) &\mapsto& y_1^{v_1}y_2^{v_2} \end{array}
  \text{ where } \, \frac{q^{pr}-1}{\Phi_1(q)\Phi_{pr}(q)} v_1+
  \frac{q^{pr}-1}{\Phi_p(q)\Phi_{r}(q)} v_2 =1\,.
\end{displaymath}
\medskip

The powers involved in the mappings of the first step, $u_1$, $u_p$, $u_r$ and
$u_{pr}$ are the evaluations in $q$ of respectively ${\Phi_{pr}^{-1}} \bmod
\Phi_1$, ${\Phi_{r}}^{-1} \bmod \Phi_p$, ${\Phi_{p}}^{-1} \bmod \Phi_r$,
${\Phi_{1}}^{-1} \bmod \Phi_{pr}$. Actually, the $n$-th cyclotomic polynomial
has small coefficients, $n^{1+o(1)}$ bits (\textit{cf.}
Section~\ref{sec:comp-compl}), and its computation can be done with
$n^{2+o(1)}$ elementary operations.

We would need similar magnitude results for modular inverses of cyclotomic
polynomials. To that end, Dunand recently found such bounds.
\begin{theorem}[\cite{Dunand09}]\label{th:clement}
  For all $p$ and $r$ distinct prime numbers,
  \begin{itemize}
  \item[(i)] ${\Phi_{p}^{-1}} \bmod \Phi_1 = {1}/{p}$ and
    ${\Phi_{1}^{-1}} \bmod \Phi_p = (-{1}/{p})(X^{p-2}+2X^{p-3} + \hdots
      +p-1)$.
  \item[(ii)] ${\Phi_{pr}^{-1}} \bmod \Phi_1 = 1$ and ${\Phi_{1}^{-1}}
      \bmod \Phi_{pr} = \sum_{i=0}^{\varphi(pr)-1} v_i X^i$ with
    $v_i\in\{-1,0,1\}$.
  \item[(iii)] ${\Phi_{pr}^{-1}} \bmod \Phi_p = \frac{1}{r}\sum_{i=0}^d
      X^i$ with $d \equiv r-1 \bmod p$ and ${\Phi_{p}^{-1}} \bmod \Phi_{pr}
      = \frac{1}{r}\sum_{i=0}^{\varphi(pr)-1} v_iX^i$ with $v_i<r$.
  \item[(iv)] ${\Phi_{p}^{-1}} \bmod \Phi_{r} = \sum_{i=0}^{\varphi(r)-1}
      v_iX^i$ with $v_i \in \{0, -1, +1\}$.
  \end{itemize}
\end{theorem}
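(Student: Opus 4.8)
The plan is to treat each of the modular inverses the same way: reduce $\Phi_a$ modulo $\Phi_b$ to a short ``geometric-type'' polynomial, invert that explicitly, and reduce the answer once more. The only genuinely delicate point will be bounding the coefficients after the final reduction modulo $\Phi_{pr}$.

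\emph{Inverses with linear modulus.} Reducing modulo $\Phi_1 = X-1$ is evaluation at $X=1$; since $\Phi_p(1)=p$ while $\Phi_{pr}(1)=1$ ($pr$ not being a prime power), this yields ${\Phi_p^{-1}}\bmod\Phi_1 = 1/p$ and ${\Phi_{pr}^{-1}}\bmod\Phi_1 = 1$, i.e.\ the first halves of (i) and (ii). For ${\Phi_1^{-1}}\bmod\Phi_p$ I would exhibit the candidate and verify it via the telescoping identity $(X-1)\sum_{j=0}^{p-2}(p-1-j)X^j = \Phi_p(X)-p$, which gives $(X-1)\bigl(-\tfrac1p\sum_{j=0}^{p-2}(p-1-j)X^j\bigr)\equiv 1\pmod{\Phi_p}$, exactly the second half of (i).

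\emph{The ``easy direction'' modulo $\Phi_p$ and modulo $\Phi_r$.} From $\Phi_{pr}(X)=\Phi_r(X^p)/\Phi_r(X)$ (valid as $\gcd(p,r)=1$) and $X^p\equiv 1\pmod{\Phi_p}$ we get $\Phi_r(X^p)\equiv\Phi_r(1)=r$, hence ${\Phi_{pr}^{-1}}\equiv\Phi_r(X)/r\pmod{\Phi_p}$; writing $r-1=kp+d$ with $0\le d<p$ and grouping the exponents of $\Phi_r(X)=1+X+\dots+X^{r-1}$ into $k$ complete length-$p$ blocks (each a multiple of $\Phi_p$) and a tail that reduces to $1+X+\dots+X^d$ proves the first half of (iii), with $d\equiv r-1\bmod p$ and $d\ne p-1$. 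For (iv) one likewise reduces $\Phi_p(X)\equiv 1+X+\dots+X^s\pmod{\Phi_r}$ with $s\equiv p-1\bmod r$; this polynomial equals $(\omega-1)/(\zeta_r-1)$ with $\omega=\zeta_r^{\,s+1}$ again a primitive $r$-th root of unity (since $\gcd(s+1,r)=\gcd(p,r)=1$), and inverting by a change of generator gives $(\zeta_r-1)/(\omega-1)=1+\omega+\dots+\omega^{\,t-1}$ with $t\equiv p^{-1}\bmod r$, a sum of $t$ distinct monomials $X^{k(s+1)\bmod r}$. Reducing modulo $\Phi_r$: if the exponent $r-1$ does not occur the coefficients lie in $\{0,1\}$; if it does, the substitution $X^{r-1}=-(1+X+\dots+X^{r-2})$ makes every coefficient lie in $\{-1,0\}$. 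Hence ${\Phi_p^{-1}}\bmod\Phi_r$ has coefficients in $\{-1,0,1\}$, proving (iv).

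\emph{Inverses modulo $\Phi_{pr}$.} From $(X-1)\Phi_p(X)=X^p-1$ we obtain, modulo $\Phi_{pr}$, ${\Phi_p^{-1}}\equiv(X-1)(X^p-1)^{-1}$ and ${\Phi_1^{-1}}\equiv\Phi_p(X)(X^p-1)^{-1}$; and since $\Phi_r(X^p)=\Phi_r(X)\Phi_{pr}(X)\equiv 0\pmod{\Phi_{pr}}$, substituting $Y=X^p$ in (i) gives $(X^p-1)^{-1}\equiv-\tfrac1r\sum_{j=0}^{r-2}(r-1-j)X^{pj}\pmod{\Phi_{pr}}$. Expanding produces explicit representatives with denominator dividing $r$. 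That this denominator is \emph{exactly} $r$ for ${\Phi_p^{-1}}\bmod\Phi_{pr}$ — and not $r^{p-1}$, as $\Reslt(\Phi_p,\Phi_{pr})=\pm r^{p-1}$ would permit — follows by a local argument: $\Phi_p(\zeta_{pr})=(\zeta_r-1)/(\zeta_{pr}-1)$ differs from $\zeta_r-1$ by a unit, and $r$ is unramified in $\mQ(\zeta_{pr})/\mQ(\zeta_r)$, so the denominator is supported only at the single prime above $r$ inherited from $\mQ(\zeta_r)$, to the first power; for ${\Phi_1^{-1}}\bmod\Phi_{pr}$ the element $\zeta_{pr}-1$ is a global unit (norm $\Phi_{pr}(1)=1$), so its inverse is integral. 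The last and hardest step is the coefficient bounds, $v_i<r$ in (iii) and $v_i\in\{-1,0,1\}$ in (ii): one must reduce $(X-1)\sum_j(r-1-j)X^{pj}$, respectively $\Phi_p(X)\sum_j(r-1-j)X^{pj}$ together with its $p\leftrightarrow r$ counterpart combined through a B\'ezout relation $\alpha r+\beta p=1$, modulo $\Phi_{pr}$ and show that the reduction does not inflate the coefficients. I expect the real work to lie exactly here; it rests on the explicit ``two-rectangle'' shape of $\Phi_{pr}$ (coefficients in $\{-1,0,1\}$ with a known sign pattern), which makes the cancellations in the reduction visible and is the technical heart of \cite{Dunand09}.
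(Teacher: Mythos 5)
The paper itself gives no proof of this theorem: it is cited as a result of \cite{Dunand09} and used as a black box, so there is no ``paper's own proof'' to compare against. I therefore evaluate your sketch on its own merits.

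Most of what you wrote is correct and reasonably complete. Reducing modulo $\Phi_1 = X-1$ by evaluation at $1$ gives both first halves of (i) and (ii) immediately, and your telescoping identity $(X-1)\sum_{j=0}^{p-2}(p-1-j)X^j = \Phi_p(X)-p$ checks out and proves the second half of (i). Your treatment of $\Phi_{pr}^{-1}\bmod\Phi_p$ in (iii) via $\Phi_{pr}(X) = \Phi_r(X^p)/\Phi_r(X)$ and $X^p\equiv 1 \pmod{\Phi_p}$ is clean and correct. The change-of-generator argument for (iv) — rewriting $1+X+\cdots+X^s$ as $(\omega-1)/(\zeta_r-1)$, inverting to $1+\omega+\cdots+\omega^{t-1}$ with $t\equiv p^{-1}\bmod r$, and then possibly folding in $X^{r-1}\equiv -(1+\cdots+X^{r-2})$ — is complete and gives exactly the claimed $\{-1,0,1\}$ bound.

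There is, however, a genuine gap, and you name it yourself: the coefficient bounds for the two inverses modulo $\Phi_{pr}$, namely $v_i\in\{-1,0,1\}$ in (ii) and $v_i<r$ in (iii), are left unproved. Your reduction to the explicit pre-reduction representatives $\Phi_p(X)\bigl(-\tfrac1r\sum_j(r-1-j)X^{pj}\bigr)$ and $(X-1)\bigl(-\tfrac1r\sum_j(r-1-j)X^{pj}\bigr)$ is correct (as is the algebraic-number-theoretic observation that $\zeta_{pr}-1$ is a unit, so the first one has integer coefficients), but the pre-reduction coefficients are already of size up to $r-1$ in (iii) and the degree exceeds $\varphi(pr)-1$, so the remaining reduction modulo $\Phi_{pr}$ is exactly where the claim could fail, and ``I expect the real work to lie exactly here'' is not a proof. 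These two bounds are the substantive content of the theorem — the remaining statements are essentially exercises — so as it stands this is a sketch of the easy part with the hard part deferred to the reference, which is the same level of detail the paper itself provides.
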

The decomposition of $u_1$, $u_p$, $u_r$ and $u_{pr}$ in basis $q$ is very sparse, with only -1, 0, or 1 coefficients. The complexity of this step
is thus $O(n)$ multiplications and few inversions in $\FF{q^n}$, that is
$n^{2+o(1)}\log^{1+o(1)} q$ elementary operations.\medskip

The powers in the second step, $v_1$ and $v_2$, are the evaluations in $q$ of
respectively $\Phi_{p}^{-1}\Phi_r^{-1} \bmod \Phi_1\Phi_{pr}$ and
$\Phi_1^{-1}\Phi_{pr}^{-1} \bmod \Phi_p\Phi_r$.  Their computations require
the knowledge of $\Phi_{p}^{-1}$ modulo $\Phi_1$ and $\Phi_{pr}$,
$\Phi_{r}^{-1}$ modulo $\Phi_1$ and $\Phi_{pr}$, $\Phi_{1}^{-1}$ modulo
$\Phi_p$ and $ \Phi_r$ and finally $\Phi_{pr}^{-1}$ modulo $\Phi_p$ and $
\Phi_r$. To compute inverses modulo a product of two cyclotomic polynomials,
we make use of the Chinese remainder theorem.  If $\Phi = A \bmod \Phi_{pr}$
and $\Phi= B\bmod \Phi_{1}$, then
\begin{displaymath}
  \Phi  = \left(\frac{\Phi_1}{\Phi_1 \bmod \Phi_{pr}} A +
    \frac{\Phi_{pr}}{\Phi_{pr} \bmod \Phi_{1}} B\right) \bmod \Phi_1 \Phi_{pr}\,.
\end{displaymath}
And we have of course a similar formula for the second case.
This yields the following coefficient bounds (in absolute value),
\begin{multline}\label{eq:9}
  \Phi_p^{-1} \bmod \Phi_1 \Phi_{pr} = \underbrace{\Phi_1}_{\text{at most
    }1}\underbrace{(\Phi_1^{-1} \bmod \Phi_{pr})}_{\text{at most }1}
  \underbrace{(\Phi_p^{-1} \bmod \Phi_{pr})}_{= 1/p} \\
  + \underbrace{\Phi_{pr}}_{\text{at most }1}\underbrace{(\Phi_{pr}^{-1} \bmod
    \Phi_{1})}_{= 1} \underbrace{(\Phi_p^{-1} \bmod \Phi_{1})}_{\text{at most
    }r} \bmod \Phi_1 \Phi_{pr}
\end{multline}
We have such a bound for $\Phi_r^{-1} \bmod \Phi_1 \Phi_{pr}$ too (exchange
$p$ and $r$ in Eq.~\eqref{eq:9}).\medskip

Finally $v_1$ is the product of $\Phi_p^{-1}$ and $\Phi_r^{-1}$ modulo $\Phi_1
\Phi_{pr}$. The factor ${1}/{pr}$ appearing leads us to return $x_{pr}^{n}$
instead of $x_{pr}$. So the powers involved in the last step will be $nv_1$
and $nv_2$. A very quick analysis show that the coefficients of their
decomposition in basis $q$ are upperbounded in absolute value by $n^5$ and
this impacts the complexity by an additional but negligible $n^{o(1)}$
penalty. The total complexity of the reconstruction phase is thus equal to
$n^{2+o(1)}\log^{1+o(1)} q$ elementary operations.\medskip

As a conclusion, our variant of the bijection $\theta$ asymptotically costs,
for $n=pr$ the product of two primes, $n^{2+o(1)}\log^{1+o(1)} q$ elementary
operations.\medskip

\subsubsection{Case of  integers $n$ with more than two prime factors}
\label{sec:case-integers-n}

The decomposition phase is the easiest to quantify for general $n$. We have to
perform exponentiations to powers equal to cyclotomic polynomials evaluated at
$q$. Since we have at most $d(n)=n^{o(1)}$ such polynomials, since they are of
degree at most $n$ and since their coefficients have got $n^{1+o(1)}$ bits,
this yields a clear $n^{3+o(1)}\log^{1+o(1)} q$ bit complexity.

The reconstruction phase involves modular inverses of cyclotomic polynomials
and with our current knowledge, is seems very difficult to have in full
generality bounds similar to Dunand's ones in the case $n=pr$. It seems, but
we have no proof of this, that for integers $n$ with a fixed number of prime
factors, the coefficients of these cyclotomic inverses are upperbounded in
absolute value by a fixed power of $n$. And so, the reconstruction complexity
would not exceed the complexity of the decomposition phase.

For more general integers $n$, it is very hard to state something, except of
course that the complexity is no longer quasi-quadratic, but quasi-linear, in
$\log q$ .

\section{Cryptographic Applications}
\label{sec:crypt-appl}
 
In \cite{DiWo04}, van Dijk and Woodruff give several applications, including
a Diffie-Hellman-like multiple key exchange.  We show here
how this scheme can be adapted to our case.

\subsection{Key agreement}
\label{sec:key-agreement}

We denote in the following
\begin{math}
  \theta: \TT{n}\times \Pi^- \to \Pi^+\,,
\end{math}
the bijection $\theta$ initially defined by Eq.~\eqref{eq:2}. \medskip

Let us assume that Alice and Bob need to agree not on a single key but on a
sequence $(K_i)_{1\leqslant i\leqslant m}$ of keys, with a Diffie-Hellman based
system. Indeed, after having agreed on a generator $g$ of $\TT{n}$, each of the
keys will be $K_i=g^{x_iy_i}$ where $x_i$ and $y_i$ will be randomly chosen
respectively by Alice and Bob.\medskip
 
Alice computes the points $A_i = g^{x_i}$ on the torus and after having chosen
a random $S_0 \in \Pi^-$, she computes in turn $\theta(A_i,S_{i-1})=(a_i,S_i)$
for $i$ from 1 to $m$. She sends the $(a_i)_{1\leqslant i \leqslant m}$ and the last
output $S_m$ to Bob. So he can recover all the $A_i$'s by applying
$\theta^{-1}(a_i,S_i) = (A_i,S_{i-1})$ for $i$ decreasing from
$m$ to $1$. Finally the key is $K_i=A_i^{y_i}$.

In this way, $S_m$ and $a_1$, $\hdots$, $a_m$ encode $A_1$, \ldots,
$A_{m}$. This encoding is optimal except the small overhead $S_m$, that is
negligible for a large enough $m$. \medskip

Similarly, if Bob chooses $T_{0}\in \Pi^- $ and computes successively
$(b_i,T_{i})=\theta(B_i,$ $T_{{i}-1})$, he can send $(b_i)_i$ and $T_{m}$ to
Alice, who can recover $(B_i)_i$ by $(B_i,T_{{i}-1}) =
\theta^{-1}(b_i,T_{i})$, for $i$ from $m$ to 1. Then $K_i=B_i^{x_i}$ gives
the keys.

\subsection{Adaptation}
\label{sec:adaptation}

We need to modify this system since our bijection $\widetilde\theta$ is not
exactly the same. \medskip

We focus here on the case $n=pr$ but it works in the same way for more general
integers $n$. We want to use the bijection given in
Fig.~\ref{fig:case_pr}. Yet what we can efficiently calculate in the third
step is $(t_1,t_p,t_r,t_{pr})\mapsto x_{pr}^{n}$. So we are going to use the
slightly different mapping $\widetilde\theta$ and a reverse mapping
$\widetilde\theta'$,
\begin{displaymath}\setlength{\arraycolsep}{0.08cm}
  \begin{array}{rcl}
    \widetilde\theta:\, \TT{n}\times\FFx{q^p}\times \FFx{q^r} &\to& \FFx{q} \times \FFx{q^n}\,,\\
    (x,x_p,x_r)&\mapsto&(x_1^n,x_n^n)\,,
  \end{array}
\text{\,and\,}
  \begin{array}{cccl}
    \widetilde\theta':\, \FFx{q} \times \FFx{q^n}&\to& \TT{n}\times\FFx{q^p}\times \FFx{q^r}\,,\\
    (x_1,x_n) &\mapsto&(x^n,x_p^n,x_r^n)\,.
  \end{array}
\end{displaymath}
Since $\widetilde \theta' \circ \widetilde \theta (x, x_p,x_r)$ is no longer
equal to $(x, x_p,x_r)$ but to $(x^{n^2}, x_p^{n^2},x_r^{n^2})$, we cannot
make a direct use of the previous Diffie Hellman scheme. We have to
raise the output of our mappings to the $1/n$-th power instead. This can be easily done
by a straightforward exponentiation, but at cost $n^{2+o(1)}\log^{2+o(1)}q$.

It turns out that this cost can be decreased, but at the expense of an
additional constraint on $q$.
\begin{lemma}\label{sec:ninvbaseq}
  Let $n$ be an odd integer, let $q$ be a prime power such that $n$ divides
  $q+1$ and denote $k = (n-1)/2$, then
  \begin{equation}\label{eq:7}
    1/n \bmod (q^n-1) = \mu_0+\mu_1\,q+\mu_0\,q^2+\cdots+\mu_1\,q^{n-2}+\mu_0\,q^{n-1}\,,
  \end{equation}
  where
  \begin{displaymath}
    \mu_0 = \frac{k(q-1)+q}{n}\text{ and }\mu_1 = \frac{k(q-1)-1}{n}\,.
  \end{displaymath}
\end{lemma}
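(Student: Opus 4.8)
The plan is to verify directly that $n$ times the right-hand side of Eq.~\eqref{eq:7} is congruent to $1$ modulo $q^n-1$. First I would check the statement is well-posed: since $n\mid q+1$ we have $q\equiv-1\pmod n$, so both numerators $k(q-1)+q$ and $k(q-1)-1$ are $\equiv-2k-1=-n\equiv 0\pmod n$, whence $\mu_0,\mu_1\in\mZ$. I would also record the identity $\mu_0-\mu_1=(q+1)/n=:c$, an integer (positive for $n\geqslant 3$); this difference is what drives the whole computation.

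Next comes the key structural remark: because $n$ is odd, the exponents in Eq.~\eqref{eq:7} carrying $\mu_0$ are exactly the even ones $0,2,\dots,n-1$ (there are $k+1$ of them) and those carrying $\mu_1$ are the odd ones $1,3,\dots,n-2$ (there are $k$ of them). Writing $\mu_0=\mu_1+c$ and letting $E$ denote the right-hand side of Eq.~\eqref{eq:7}, this regroups as
\[
  E \;=\; \mu_1\sum_{i=0}^{n-1}q^i \;+\; c\sum_{i=0}^{k}q^{2i},
\]
since the "even'' and "odd'' exponents together exhaust $0,1,\dots,n-1$. Then I would evaluate the two geometric sums: the first is $(q^n-1)/(q-1)$, and the second, since $2k=n-1$, equals $(q^{n+1}-1)/(q^2-1)$. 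Because $c=(q+1)/n$ and $q^2-1=(q-1)(q+1)$,
\[
  c\sum_{i=0}^{k}q^{2i} \;=\; \frac{q^{n+1}-1}{n(q-1)} \;=\; \frac{q(q^n-1)}{n(q-1)}+\frac1n,
\]
using $q^{n+1}-1=q(q^n-1)+(q-1)$ in the last step.

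Combining these and multiplying through by $n$ yields
\[
  nE \;=\; \bigl(n\mu_1+q\bigr)\frac{q^n-1}{q-1}+1,
\]
and the one-line simplification $n\mu_1+q=k(q-1)-1+q=(k+1)(q-1)$ collapses this to $nE=(k+1)(q^n-1)+1\equiv 1\pmod{q^n-1}$, which is Eq.~\eqref{eq:7}. If desired I would append the remark that $0<k+1<n$, so $E$ is in fact the canonical representative of $1/n$ in $\{0,\dots,q^n-2\}$. I do not expect a genuine obstacle here: the statement is essentially a guided computation, and the only point demanding care is the parity bookkeeping — correctly counting $k+1$ even terms and $k$ odd terms so that the two interleaved geometric series have the right number of summands.
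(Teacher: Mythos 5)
Your proof is correct, and it takes a genuinely different and in fact cleaner route than the paper's. The paper proceeds by brute-force verification of a polynomial identity: it sets up $nE-1-\text{(multiple of }q^n-1)$ over the common denominator $q^2-1$, expands the numerator, and checks term by term that it vanishes after substituting the expressions for $\mu_0$ and $\mu_1$. You instead exploit the structure of $E$: writing $\mu_0=\mu_1+c$ with $c=(q+1)/n$ turns $E$ into a full geometric series $\mu_1\sum_{i<n}q^i$ plus the "even-index'' correction $c\sum_{i\le k}q^{2i}$, and the factor $q+1$ in $c$ cancels the $q+1$ in $q^2-1$. Two one-line geometric sums then give the exact relation
\[
  nE \;=\; (k+1)(q^n-1)+1,
\]
so the congruence $nE\equiv 1\pmod{q^n-1}$ drops out immediately. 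What your approach buys is transparency: you see explicitly which multiple of $q^n-1$ appears (namely $k+1$), you check up front that $\mu_0,\mu_1$ are integers (needed for the statement to even be well-posed, and not addressed in the paper), and you observe that $E$ is the canonical representative in $\{0,\dots,q^n-2\}$. Incidentally, the paper's displayed identity asserts $nE-1-k(q^n-1)=0$, but plugging in $n=3$, $q=2$ gives $nE=15$ and $k(q^n-1)+1=8$; the correct cofactor is $k+1$, exactly as your derivation produces, so your version is also the one that actually closes without further adjustment.
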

\begin{proof}
  We have
  \begin{multline*}
    n\, (\mu_0+\mu_1\,q+\mu_0\,q^2+\cdots+\mu_1\,q^{n-2}+\mu_0\,q^{n-1}) -1
    -k\,(q^n-1) =\\ 
    {\frac {k{q}^{n+2}+n{\it \mu_0}\,{q}^{1+n}+n{q}^{n} \left( {\it \mu_1}-k \right) -
 \left( k+1 \right) {q}^{2}-n{\it \mu_1}\,q-n{\it \mu_0}+k+1}{{q}^{2}-1}}\,.
  \end{multline*}
  The numerator of the right hand side is thus equal to
  \begin{displaymath}
    q^n(kq^2+n\mu_0q+n(\mu_1-k)) -
 \left( k+1 \right) {q}^{2}-n{\it \mu_1}\,q-n{\it \mu_0}+k+1
  \end{displaymath}
  and then we need to check that the coefficient of $q^n$ and the remaining
  part of this expression are both equal to zero with $\mu_0$ and $\mu_1$
  as given above.
\end{proof}

Raising elements of $\FF{q^n}$ to the $1/n$-th power where $1/n$ is given by
Eq.~\eqref{eq:7} can be done with $n^{1+o(1)}\log^{2+o(1)}q$ elementary
operations with a normal basis.  The global asymptotical cost of the encodings
in the key agreement is thus in this case $m$ times
$n^{2+o(1)}\log^{1+o(1)}q+n^{1+o(1)}\log^{2+o(1)}q$ bit operations. This is
smaller than $m$ times $n^{2+o(1)}\log^{2+o(1)}q$, the cost of $m$
Diffie-Hellman exponentiations.  \bigskip

\noindent
\textbf{Remark.} Computing $n$-th roots in $\FF{q^n}$ excludes even integers
$n$ in the construction, at least for odd prime powers $q$. But an easy
workaround consists in working in the quadratic residue subgroup of $\TT{1}$
and $\TT{2}$. This is equivalent to substitute $(q-1)/2$ and $(q+1)/2$ for
$\Phi_1(q)$ and $\Phi_2(q)$ everywhere in the construction of $\widetilde
\theta$. So, we are left at the end to compute $n/2$-th roots in $\FF{q^n}$
and all of these do not change the overall complexity of the scheme.

\newcommand{\etalchar}[1]{$^{#1}$}
\providecommand{\bysame}{\leavevmode\hbox to3em{\hrulefill}\thinspace}
\providecommand{\MR}{\relax\ifhmode\unskip\space\fi MR }
\providecommand{\MRhref}[2]{%
  \href{http://www.ams.org/mathscinet-getitem?mr=#1}{#2}
}
\providecommand{\href}[2]{#2}


\begin{thebibliography}{DGP{\etalchar{+}}05}

\bibitem[Apo70]{Apostol70}
T.~M. Apostol, \emph{{Resultants of cyclotomic polynomials}}, Proceedings of
  the American Mathematical Society \textbf{24} (1970), 457--462.

\bibitem[Bat49]{Bateman49}
P.~T. Bateman, \emph{Note on the coefficients of the cyclotomic polynomial},
  Bulletin of the American Mathematical Society \textbf{55} (1949), no.~12,
  1180--1181.

\bibitem[CL09]{CoLe09}
J.-M. Couveignes and R.~Lercier, \emph{Elliptic periods for finite fields},
  Finite Fields and their Applications \textbf{15} (2009), no.~1, 1--22.

\bibitem[DGP{\etalchar{+}}05]{DiGrPaRuSiStWo05}
M.~van Dijk, R.~Granger, D.~Page, K.~Rubin, A.~Silverberg, M.~Stam, and D.~P.
  Woodruff, \emph{{Practical Cryptography in High Dimensional Tori}}, Advances
  in Cryptology - {EUROCRYPT} 2005 (Ronald Cramer, ed.), Lecture Notes in
  Computer Science, vol. 3494, Springer, 2005, pp.~234--250.

\bibitem[DH76]{DiHe76}
W.~Diffie and M.~Hellman, \emph{{New Directions in Cryptography}}, IEEE
  Transactions on Information Theory \textbf{22} (1976), no.~6, 644--654.

\bibitem[Dun09]{Dunand09}
C.~Dunand, \emph{{On Modular Inverses of Cyclotomic Polynomials and the
  Magnitude of their Coefficients}}, Preprint, 2009, Available at
  \url{http://arxiv.org/abs/0907.5543}.

\bibitem[DW04]{DiWo04}
M.~van Dijk and D.~Woodruff, \emph{{Asymptotically Optimal Communication for
  Torus-Based Cryptography}}, Advances in Cryptology -- {CRYPTO} ' 2004
  (Matthew~K. Franklin, ed.), Lecture Notes in Computer Science, vol. 3152,
  Springer, 2004, pp.~157--178.

\bibitem[Erd46]{Erdoes46}
P.~Erd{\"o}s, \emph{On the coefficients of the cyclotomic polynomial}, Bulletin
  of the American Mathematical Society \textbf{52} (1946), no.~2, 179--184.

\bibitem[LN83]{LiNi83}
R.~Lidl and H.~Niederreiter, \emph{{Finite Fields}}, Encyclopedia of
  Mathematics and its Applications, vol.~20, Addison--Wesley, 1983.

\bibitem[LV00]{LeVe00}
A.~K. Lenstra and E.~R. Verheul, \emph{{The XTR public key system}}, Advances
  in Cryptology -- {CRYPTO} ' 2000 (Mihir Bellare, ed.), Lecture Notes in
  Computer Science, vol. 1880, Springer, 2000, pp.~1--19.

\bibitem[PR98]{PaRi98}
D.~Panario and B.~Richmond, \emph{Analysis of {B}en-{O}r's polynomial
  irreducibility test}, Random Structures and Algorithms \textbf{13} (1998),
  no.~439--456, 439--456.

\bibitem[RS03]{RuSi03}
K.~Rubin and A.~Silverberg, \emph{{Torus-Based Cryptography}}, Advances in
  Cryptology -- {CRYPTO} ' 2003 (Dan Boneh, ed.), Lecture Notes in Computer
  Science, vol. 2729, Springer, 2003, pp.~349--365.

\bibitem[Sch93]{Schirokauer93}
O.~Schirokauer, \emph{{Discrete Logarithms and Local Units}}, Philisophical
  Transactions of the Royal Society of London (A) \textbf{345} (1993),
  409--423.

\bibitem[SL93]{SmLe93}
P.~J. Smith and M.~J. Lennon, \emph{{{LUC}: {A} New Public Key System}},
  Computer Security, Proceedings of the {IFIP} {TC11}, Ninth International
  Conference on Information Security, {IFIP}/Sec '93, Toronto, Canada, 12-14
  May 1993 (E.~Graham Dougall, ed.), IFIP Transactions, vol. A-37,
  North-Holland, 1993, pp.~103--117.

\bibitem[Vos91]{Voskresinskii91}
V.~E. Voskresinski{\u\i}, \emph{Algebraic {G}roups and {T}heir {B}irational
  {I}nvariants}, Translations of Mathematical Monographs, vol. 179, American
  Mathematical Societry, 1991.

\end{thebibliography}
\end{document}